\documentclass[11pt]{article}
\usepackage[utf8]{inputenc}
\usepackage{amsmath,amsthm,amssymb}
\usepackage{mathrsfs}
\usepackage{mathtools}
\usepackage[margin=1in]{geometry}
\usepackage{hyperref}

\newtheorem{theorem}{Theorem}
\newtheorem{lemma}[theorem]{Lemma}

\theoremstyle{definition}
\newtheorem{definition}[theorem]{Definition}
\theoremstyle{remark}
\newtheorem{remark}[theorem]{Remark}

\newcommand{\NP}{\mathscr{N\!P}}

\newcommand{\Vone}{V^0_1}
\newcommand{\Sat}{\mathrm{Sat}}
\newcommand{\WF}{\mathrm{WF}}
\newcommand{\WFH}{\mathrm{WFH}}
\newcommand{\NZ}{\mathrm{NZ}}
\newcommand{\HIT}{\mathrm{HIT}}
\newcommand{\Grd}{\mathrm{Grd}}
\newcommand{\Adm}{\mathrm{Adm}}
\newcommand{\HScert}{\mathrm{HS}^{\mathrm{c}}}
\newcommand{\dWPHPcert}{\mathrm{dWPHP}^{\mathrm{c}}}

\newcommand{\SigmaB}{\Sigma^B_0}
\newcommand{\code}[1]{\ulcorner #1 \urcorner}

\title{Unconditional $\Vone$-independence of a certified hitting-set principle}
\author{%
  Martin Kolář%
}
\date{}

\begin{document}
\maketitle

\begin{abstract}
We show that a certified formalization of the hitting-set-existence axiom of
Atserias and Tzameret, instantiated on the parity-based Nisan--Wigderson
compression class of Khaniki, is independent of the two-sorted theory $\Vone$
of $\mathrm{AC}^0$-reasoning, unconditionally: $\Vone$ proves neither it nor
its negation. The same holds for the corresponding certified dual weak
pigeonhole principle, whose refutation is witnessed by a single seed that
certified-computes every string of the model simultaneously. The mechanism is
a bounded-arithmetic transfer of Atserias--Tzameret's reduction from hitting
sets to the dual weak pigeonhole principle: the amplification half of that
reduction, the sole source of its $\NP$-oracle, is unnecessary at the native
stretch of the Nisan--Wigderson map, and the compression half becomes a
$\Vone$-provable implication once circuit evaluation is replaced by its
certified $\SigmaB$ unfolding. This is, to our knowledge, the first
independence result for a derandomization-flavoured existence principle at the
$\mathrm{AC}^0$-reasoning level, and it makes explicit the bridge between the
Khaniki Nisan--Wigderson line and the Atserias--Tzameret reverse mathematics
of hitting sets.
\end{abstract}

\section*{Introduction}

By the theorem of Cook and Reckhow \cite{CookReckhow1979}, $\NP\neq\mathrm{co}\NP$
is equivalent to the non-existence of a polynomially bounded propositional proof
system, and the search for superpolynomial lower bounds for concrete proof
systems is, through the translation of bounded arithmetic into propositional
logic, the search for unprovability of combinatorial principles in fragments of
arithmetic \cite{Krajicek1995,Krajicek2019}. The dual weak pigeonhole principle
$\mathrm{dWPHP}(\mathrm{PV})$ --- the assertion that no polynomial-time map is
onto a set twice the size of its domain --- occupies a central place in this
program: its extension of $S^1_2$ was proposed by Kraj\'{\i}\v{c}ek as the basic
theory for randomized computation and developed by Je\v{r}\'abek into the theory
of approximate counting ($\mathrm{APC}_1$) \cite{Jerabek2004,Jerabek2007}, who
also proved it equivalent, over $S^1_2$, to the existence of Boolean functions of
exponential circuit complexity \cite{Jerabek2004}. Its unprovability is known
conditionally: in $S^1_2$, for the injective weak pigeonhole principle, if RSA is
secure \cite{KrajicekPudlak1998}; and in the true universal theory
$T_{\mathrm{PV}}$ (hence in $\mathrm{PV}_1$), for the dual principle at the
truth-table function, under the circuit upper-bound hypothesis
$\mathrm{P}\subseteq\mathrm{Size}(n^d)$ \cite{Krajicek2021}. Unconditionally it
is open. Through the lens of search problems, the principle governs the range
avoidance class $\mathrm{APEPP}$, for which constructing a truth table of
maximal circuit complexity is complete under $\mathscr P^{\NP}$-reductions
\cite{Korten2021}.

Recently Atserias and Tzameret \cite{AtseriasTzameret2025} gave a feasibly
constructive proof of the Schwartz--Zippel lemma and established, over Buss's
theory $S^1_2$, the equivalence of $\mathrm{dWPHP}(\mathrm{PV})$ with a hitting-set
existence axiom $\mathrm{HS}(\mathrm{PV})$ for definable algebraic circuit
classes, placing the underlying search problem in the class $\mathrm{APEPP}$ of
range avoidance under $\mathscr P^{\NP}$-reductions. Independently, Khaniki
\cite{Khaniki2022} produced models of the $\mathrm{AC}^0$-reasoning theory
$\Vone$ in which the propositional translation of a parity-based
Nisan--Wigderson generator fails, by an Ajtai--Paris--Wilkie forcing that works
for any $\Sigma^1_1\cap\Pi^1_1$-definable pair.

We combine the two. Write $\Vone$ for the two-sorted theory of
$\mathrm{AC}^0$-reasoning in which Khaniki's model construction is carried out
--- his $V^0_1$ \cite[Theorem 2.3]{Khaniki2022}, the
$\SigmaB$-comprehension theory over Zambella's two-sorted vocabulary. Up to the
standard identifications recorded in \cite[Notes to Chapter V]{CookNguyen2010}
--- ``$V^0$ \dots is essentially $\Sigma^p_0$-comp [Zambella] and
$I\Sigma^{1,b}_0$ (without $\#$) in [Kraj\'{\i}\v{c}ek 1995]'' --- this is the
base theory $V^0$ of \cite{CookNguyen2010}; all our sentences live in the
$\#$-free fragment, with scale terms handled by $\Delta_0$ guards. Fix the parity pair and a
certified evaluation layer as in \S\ref{sec:target}. Instantiating the
Atserias--Tzameret axiom on Khaniki's compression class gives a sentence
$\HScert$, and the associated compression statement gives $\dWPHPcert$. Our
results are the following.

\begin{itemize}
\item[(A)] $\Vone\nvdash\dWPHPcert$ and $\Vone\nvdash\neg\dWPHPcert$
(Theorem~\ref{thm:dwphp}). The refuting model satisfies the strong negation:
a single seed certified-computes every string of the model.
\item[(B)] $\Vone\vdash\HScert\to\dWPHPcert$ (Lemma~\ref{lem:bridge}), whence
$\Vone\nvdash\HScert$ and $\Vone\nvdash\neg\HScert$ (Theorem~\ref{thm:hs}). In
particular, for every $\SigmaB$-definable candidate hitting-set family, $\Vone$
does not prove that the family hits the certified class.
\end{itemize}

The Atserias--Tzameret reduction is proved over $S^1_2$ and passes through an
$\mathscr P^{\NP}$-computable amplification step; neither survives descent to
$\mathrm{AC}^0$-reasoning. The content of Lemma~\ref{lem:bridge} is that neither
is needed. Amplification manufactures stretch $m\mapsto m^3$ from an arbitrary
polynomial-time map and converts a range-avoider back into an avoider for the
original map; for the Nisan--Wigderson map both are moot, because its native
stretch already exceeds $m^3$ and the principle speaks about that map itself.
What remains --- the compression argument --- is a bounded-quantifier
computation that goes through in $\Vone$ once evaluation is written in its
unfolded, evasion form. The single unconditional obstruction to carrying the
whole Atserias--Tzameret equivalence down, rather than only its instantiation
on the parity class, is that $\mathrm{PARITY}\notin\mathrm{AC}^0$: an evaluated
encoding of the class is not $\SigmaB$-definable, so the certified graph is the
only formalization available at this level (\S\ref{sec:obstructions}).

We do not reprove the basic theory of $\Vone$, of the Paris--Wilkie translation,
or of Khaniki's forcing; the reader may consult
\cite{Krajicek1995,CookNguyen2010,Khaniki2022}. Throughout, $\SigmaB$ denotes
sharply bounded ($\mathrm{AC}^0$) formulas in the two-sorted setting; $|x|$ is
binary length; $\code{\cdot}$ is G\"odel coding.

\section{The certified principle}\label{sec:target}

\subsection{The parity pair}
Fix the $\Sigma^1_1\cap\Pi^1_1$ definition of parity used by Khaniki
\cite{Khaniki2022}. Let $\varphi_0(X,Y)$ assert that $Y$ codes a perfect
matching on the support of $X$ (certifying even parity), and let
$\varphi_1(X,Z)$ assert that $Z$ codes a near-perfect matching on the support of
$X$ missing exactly one distinguished support element $w$, with no $Z$-edge at
$w$ (certifying odd parity). In $\mathbb N$, $\exists Y\,\varphi_0$ defines even
parity and $\exists Z\,\varphi_1$ defines odd parity of the finite set coded by
$X$.

\subsection{Certified evaluation}\label{subsec:eval}
Let $J$ be a strict-$\Delta_0$ scheme family in the format of \cite{Khaniki2022}
(seed length $L=n^s$, certificate length $\ell=n^t$, $n=\log_2 N$), coding the
Nisan--Wigderson map $\{0,1\}^L\to\{0,1\}^N$ whose $p$-th output bit is the
parity of a $J$-selected sub-tuple of the seed. For a seed string $W$ and a
certificate collection $Y$ block-coded over the $N$ output coordinates, put:
\begin{itemize}
\item $\Sat_0(p,W,Y)$, $\Sat_1(p,W,Y)$: the $\SigmaB$ relativizations of
$\varphi_0,\varphi_1$ to $[\ell]$, reading seed atoms through $J$-lookups into
$W$ and certificate atoms in \emph{one shared} binary relation $R_p$ coded by
block $p$ of $Y$. (A shared slot, rather than disjoint $Y$- and $Z$-slots, is
what makes the value $\varepsilon$ below single-valued.)
\item $\WF(W,Y):\equiv\forall p<N\,(\Sat_0(p,W,Y)\vee\Sat_1(p,W,Y))$
(\emph{certified-total}), $\SigmaB$.
\item $\varepsilon(p;W,Y):=1$ if $\Sat_1(p,W,Y)$, else $0$: the certified value,
$\SigmaB$-graph.
\end{itemize}

\begin{lemma}[single-valuedness]\label{lem:mx}
$\Vone\vdash\WF(W,Y)\to\forall p<N\,\neg(\Sat_0(p,W,Y)\wedge\Sat_1(p,W,Y))$.
\end{lemma}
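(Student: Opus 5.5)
The plan is to show that $\Sat_0(p,W,Y)\wedge\Sat_1(p,W,Y)$ is refutable in $\Vone$ by a purely local argument, with no counting. This works because both formulas read their certificate atoms from the \emph{same} relation $R_p$, coded by block $p$ of $Y$. In fact the hypothesis $\WF(W,Y)$ should not be needed at all: I will prove $\forall p<N\,\neg(\Sat_0\wedge\Sat_1)$ outright, and the stated implication follows a fortiori.

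First, fix $p<N$ and argue inside $\Vone$, assuming both $\Sat_0(p,W,Y)$ and $\Sat_1(p,W,Y)$. By the shape of $\varphi_1$ as relativized to $[\ell]$, $\Sat_1$ provides a number $w<\ell$ with three properties. It lies in the support of the $J$-selected sub-tuple of $W$, i.e.\ the relevant $J$-lookup into $W$ returns $1$ at $w$. There is no $R_p$-edge at $w$, i.e.\ $\forall v<\ell\,\neg(R_p(w,v)\vee R_p(v,w))$. And $R_p$ restricted to the remaining support elements is a matching.

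Second, unfold $\Sat_0$. It says that $R_p$ is a perfect matching on the support of the same $J$-selected sub-tuple, read through the same lookups into the same $W$. Instantiating its universal clause ``every support element has an $R_p$-partner'' at $w$ yields some $v<\ell$ with $R_p(w,v)$ (or $R_p(v,w)$, according to the edge convention). This contradicts the no-edge clause of $\Sat_1$ at $w$.

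Every step is either instantiation of a bounded quantifier or propositional reasoning on $\SigmaB$ matrices. So the refutation already goes through in the open/$\SigmaB$ base of $\Vone$, and in particular needs no comprehension or induction.

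The only point needing care, which I regard as the main obstacle, is bookkeeping in the relativization. I must check two things. First, ``support of $X$'' is evaluated by literally the same $\SigmaB$ term in $\Sat_0$ and $\Sat_1$, namely the same $J$-lookups into $W$, so that membership of $w$ transfers between them. Second, the edge conventions agree: if $\Sat_0$ states partnership asymmetrically (e.g.\ via $R_p(x,y)$ with $x<y$) while $\Sat_1$ states ``no edge at $w$'' symmetrically, both orientations must be covered. I would resolve this by fixing the convention that an edge is $R_p(x,y)\vee R_p(y,x)$ in both formulas. This is exactly where the shared slot matters. With disjoint slots $Y,Z$, single-valuedness would amount to ``no set is both even and odd'', a parity-counting statement that is unavailable in $\Vone$ since $\mathrm{PARITY}\notin\mathrm{AC}^0$. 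With a shared slot it collapses to the one-point contradiction above.
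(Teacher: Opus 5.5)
Your proposal is correct and is essentially the paper's proof: both read $\Sat_0(p,W,Y)$ as ``every support vertex has an $R_p$-partner'' and $\Sat_1(p,W,Y)$ as ``the distinguished support vertex $w$ has no $R_p$-edge.'' Both then get the contradiction at $w$ by bounded-quantifier instantiation in the shared slot, using no comprehension or induction, and your extra remarks (that $\WF(W,Y)$ is never used, and that the support and edge convention must be the same $\SigmaB$ expressions in both formulas) spell out bookkeeping the paper leaves implicit.
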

\begin{proof}
Reasoning in $\Vone$ over $[\ell]$: from $\Sat_0(p,W,Y)$ the relation $R_p$ is a
total matching on $\mathrm{supp}(X_p)$ (every support vertex is matched); from
$\Sat_1(p,W,Y)$ the same $R_p$ leaves the distinguished vertex $w$ unmatched.
Both are $\SigmaB$ statements about the single relation $R_p$ and contradict at
$w$. No comprehension beyond reading $R_p$ is used.
\end{proof}

Fix the Atserias--Tzameret numerology \cite[(38)]{AtseriasTzameret2025} at
description size $m:=L$: $n':=m$, $d:=m^2$, $q:=2m^3$, $r:=4m|m|$, and the
injective packing $e(i',j,k):=(i'n'+j)|q|+k$ of $[r]\times[n']\times[|q|]$. For
$\WF$ pairs let $v_{i'j}(W,Y)<2^{|q|}$ be the number whose $k$-th bit is
$\varepsilon(e(i',j,k);W,Y)$ (existence and uniqueness of a number with
prescribed $|q|$ bits is provable in $\Vone$ by $\SigmaB$-induction on bit
positions). Define the arithmetic guard
\[
  \Grd(N,s,t):\equiv (N,s,t)\in\Adm\wedge \mathrm{Fmt}_J
  \wedge r n'|q|\le m^3 \wedge m^3\le N \wedge r>m+n'|q| \wedge 2r|q|\le d \wedge q\ge 2n'd,
\]
strict-$\Delta_0$ (the conditions of \cite[(39)]{AtseriasTzameret2025} together
with the fit $m^3\le N$).

\subsection{The class and the two sentences}
Following \cite[(40)]{AtseriasTzameret2025}, associate to a $\WF$ pair the
algebraic circuit
$A_{(W,Y)}(z_1,\dots,z_{n'}):=\prod_{i'\in[r]}\sum_{j\in[n']}(z_j-v_{i'j}(W,Y))^2$.
Over $\mathbb Z$, $A_{(W,Y)}(a)=0$ iff $a$ equals some decoded point
$v_{i'}=(v_{i'1},\dots,v_{i'n'})$, and $A_{(W,Y)}$ is not the zero polynomial
(it is positive at $(2q,\dots,2q)$). We formalize the evaluation claims in
\emph{unfolded (evasion) form}, which avoids the (non-$\mathrm{AC}^0$) algebraic
evaluation:
\[
  \NZ(a,W,Y):\equiv \forall i'<r\,\exists j<n'\,(a_j\neq v_{i'j}(W,Y)),\qquad
  \HIT(i,H,W,Y):\equiv \forall i'<r\,\exists j<n'\,(h_{i,j}\neq v_{i'j}(W,Y)),
\]
with $H$ coded as $r n'|q|$ bits and $\WFH(H):\equiv$ ``each $h_{i,j}<q$''
($\SigmaB$).

\begin{definition}\label{def:sentences}
\begin{align*}
\HScert[J]\ :\equiv\ &\forall N,s,t\,\Big[\Grd(N,s,t)\to\exists H\big(\WFH(H)\wedge
\forall W,Y\,\big(\WF(W,Y)\to\\
&\quad ((\exists a\in[2q{+}1]^{n'}\,\NZ(a,W,Y))\to(\exists i<r\,\HIT(i,H,W,Y)))\big)\big)\Big];\\
\dWPHPcert[J]\ :\equiv\ &\forall N,s,t\,\Big[\Grd(N,s,t)\to\exists b\subseteq[N]\,
\forall W,Y\,\big(\WF(W,Y)\to\exists p<N\,\varepsilon(p;W,Y)\neq b(p)\big)\Big].
\end{align*}
\end{definition}
$\HScert$ is the axiom $\mathrm{HS}(\mathcal C)$ of
\cite[\S5.1]{AtseriasTzameret2025} at $\mathcal C:=\{A_{(W,Y)}:\WF(W,Y)\}$ with
the Atserias--Tzameret parameters; $\dWPHPcert$ says some length-$N$ string lies
outside the certified range of the map, i.e.\ the dual weak pigeonhole principle
for the compressing map $\{0,1\}^L\to\{0,1\}^N$. The existential witness $b$
ranges over the string sort.

\section{The model: certified compression is consistent}\label{sec:model}

\begin{theorem}\label{thm:dwphp}
$\Vone\nvdash\dWPHPcert[J]$; indeed $\Vone$ together with all true
strict-$\Delta_0$ number-sort sentences does not prove it. Equivalently,
$\Vone+\neg\dWPHPcert[J]$ is consistent.
\end{theorem}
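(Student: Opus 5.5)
We build a model of $\Vone$ plus all true strict-$\Delta_0$ number-sort sentences in which $\neg\dWPHPcert[J]$ holds. The model is Khaniki's forcing model \cite[Theorem 2.3]{Khaniki2022}, read with the certified graph in place of propositional evaluation. The negation to verify is
\[
\exists N,s,t\,\bigl[\Grd(N,s,t)\wedge\forall b\subseteq[N]\,\exists W,Y\,(\WF(W,Y)\wedge\forall p<N\,\varepsilon(p;W,Y)=b(p))\bigr].
\]
We aim for the stronger form announced in (A): a single seed $W$ such that every string $b$ of the model has a certificate collection $Y$ making $(W,Y)$ well-formed with certified output $b$.

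\textbf{Step 1: the number part.} Start from a countable nonstandard model $M$ of true arithmetic, and hence of all true strict-$\Delta_0$ number sentences. Fix a nonstandard $n$ and put $N=2^n$, with standard $s,t$ chosen so that $\Grd(N,s,t)$ holds. This is possible because $m=L=n^s$ is polylogarithmic in $N$, so $m^3\le N$ and the conditions of \cite[(39)]{AtseriasTzameret2025} hold for large $n$. By truth in $\mathbb N$, these facts transfer to $M$.

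\textbf{Step 2: the forcing.} Run Khaniki's Ajtai--Paris--Wilkie forcing over $M$ with the parity pair $(\varphi_0,\varphi_1)$ and the scheme $J$. The result is a generic seed $W$ together with, for each target string $b$ of the ground model, certificates $Y_b$ satisfying the relevant $\varphi_{b(p)}$ at each coordinate $p$. The second sort of the final model is taken as the ground strings plus everything $\SigmaB$-definable from the generic objects. Since the forcing conditions are partial matchings of size $\ll\ell$, genericity and a bounded-depth switching/extension argument should show two things. First, $\SigmaB$-comprehension and the induction axioms of $\Vone$ hold. Second, for every string $b$ of the final model, including those defined from the generic, the generic can be extended so that $\Sat_{b(p)}(p,W,Y)$ holds for all $p<N$. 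This is the main step, and it is exactly the content Khaniki proves for the failure of the propositional NW-translation. We must check that his generic certificates fit into the shared-slot format $R_p$ of block $p$. We also must check that his ``$\tau(b)$ fails'' statement is literally $\WF(W,Y)\wedge\forall p\,\varepsilon(p;W,Y)=b(p)$. By Lemma~\ref{lem:mx}, $\WF$ together with $\Sat_{b(p)}$ forces $\varepsilon(p)=b(p)$, so only the satisfaction of $\Sat_{b(p)}$ at each $p$ needs to be established.

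\textbf{Main obstacle.} The hard point is uniformity over strings $b$ that are added by the forcing itself. The seed is fixed once, and each such $b$ must be hit by its own definable certificate. We plan to handle this by the standard iteration in Khaniki's construction: $b$ depends on only a small part of the generic, and the remaining freedom in the matching slots suffices to build the certificates. If uniformity fails, we fall back to the weak negation, in which the seed may depend on $b$. That already refutes $\dWPHPcert$, and it requires only one generic extension per $b$ plus a compactness argument. Consistency of $\Vone+\neg\dWPHPcert[J]$ then follows, and with it unprovability.
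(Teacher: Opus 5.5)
You state the target correctly: a shared seed with a $b$-dependent certificate collection, since a single pair $(W,Y)$ has only one certified output. But there is a genuine gap exactly where you flag the ``main obstacle'', and neither of your ways out closes it.

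The sentence $\neg\dWPHPcert[J]$ quantifies over \emph{every} string $b$ of one final model, including strings $\SigmaB$-definable from the generic objects. Khaniki's forcing is organized around a fixed target: it falsifies $\tau_b$ for a given $b$. So it does not by itself give your Step~2 claim of one generic seed with certificates $Y_b$ for all ground-model $b$. It also says nothing about strings the forcing creates. ``Extending the generic'' after the model is fixed changes the model and produces new strings that need witnesses in turn.

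The fallback is not valid either. Separate generic extensions, each putting one designated $b$ into the certified range, cannot be glued by compactness into a model of $\forall b\,\exists W,Y\,(\WF(W,Y)\wedge\varepsilon(\cdot;W,Y)=b)$. That universal quantifier ranges over the strings of a single model. Compactness only reduces consistency of $\Vone+\neg\dWPHPcert[J]$ to consistency of finite fragments of $\Vone$ together with the very same sentence. A union-of-chains iteration would instead require running the forcing over an arbitrary ground model of $\Vone$ with generic targets, which Khaniki's construction does not provide.

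The missing idea, which the paper's construction supplies, is to make the certificate collection a $\SigmaB$-definable function of $b$ and fixed parameters, so that uniformity comes for free. In Khaniki's model the pigeonhole principle fails at scale $\ell$. The weight-shifting bijections $F(i,a,\cdot)\in\chi'$ of \cite[Lemma 4.5]{Khaniki2022} transport \emph{both} base certificates $\bar\lambda_0,\bar\lambda_1$ (at base assignments of weights $u^\ast$ and $u^\ast+1$) onto \emph{every} coordinate of the single seed $W^\ast=\alpha$. This gives uniformly coded families $\{Y_p\},\{Z_p\}$ with $\Sat_0$ and $\Sat_1$ holding at each $p<N$.

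For an arbitrary $b\in\mathcal X$, $\SigmaB$-comprehension then produces the collection whose block $p$ is $Y_p$ if $b(p)=0$ and $Z_p$ if $b(p)=1$. This pair is $\WF$, and by Lemma~\ref{lem:mx} its certified value is exactly $b$. The assembly happens inside the model, and $b$ is never used in building the families. So strings added by the forcing are handled automatically, and no switching argument, iteration, or compactness is needed.

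The essential point your plan misses is that one seed must carry certificates of \emph{both} parities at each coordinate. This requires the model to violate the parity counting principle at scale $\ell$, and in Khaniki's model it is obtained from the failure of the pigeonhole principle there.
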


\begin{proof}
It suffices to produce, at one admissible nonstandard scheme size, a two-sorted
model $(M,\mathcal X)\models\Vone$ containing a $\WF$ pair $(W^\ast,Y^\ast)$
whose certified value function is onto $[N]$: a single seed $W^\ast$ with a
certificate collection $Y^\ast$ such that for every string $b\in\mathcal X$ there
is $p<N$ with $\varepsilon(p;W^\ast,Y^\ast)=b(p)$ failing --- more strongly, such
that $\{\,p\mapsto\varepsilon(p;W^\ast,Y^\ast)\,\}$ realizes every $b$.

This is Khaniki's construction \cite[\S4]{Khaniki2022} read off the certified
values, with the target string $b$ never used in the assembly. Take the cut
$M_{n^t}$ and the family $\chi'$ of \cite[Lemma 4.5]{Khaniki2022} refuting the
scale-$\ell$ pigeonhole principle; for each coordinate $i$ the weight-shifting
bijections $F(i,a,\cdot)\in\chi'$ transport the base certificates
$\bar\lambda_0,\bar\lambda_1$ (obtained from the lexicographically least
witnesses to $\varphi_0,\varphi_1$ at the base assignments
$\theta_0=1^{u^\ast}0^{\cdots}$, $\theta_1=1^{u^\ast+1}0^{\cdots}$) onto
certificates at coordinate $i$. Block-coding the resulting families
$\{Y_i\},\{Z_i\}$ into a single $\chi'$-element by $\SigmaB$-comprehension yields
$Y^\ast$; put $W^\ast:=\alpha$ (the forcing seed). By
\cite[Lemma 2.1]{Khaniki2022} (isomorphism-invariance of first-order
satisfaction under a coded bijection) and the fact that $\varphi_0(\theta_0,
\bar\lambda_0)$ holds, $\Sat_0(p,W^\ast,Y^\ast)$ holds wherever the shift selects
the even base, and identically $\Sat_1$ holds wherever it selects the odd base;
the shift is arranged, coordinate by coordinate, to realize the prescribed
value pattern. Thus $\WF(W^\ast,Y^\ast)$ holds and the certified value function
is onto; hence for no $b$ does the inner matrix of $\dWPHPcert$ hold, and
$(M,\mathcal X)\models\neg\dWPHPcert[J]$.

That $(M,\mathcal X)\models\Vone$ and that it satisfies all true strict-$\Delta_0$
number-sort sentences is \cite[Theorem 2.3]{Khaniki2022}: the cut model of the
Ajtai--Paris--Wilkie forcing is a model of $\Vone$ for every $\Sigma^1_1\cap
\Pi^1_1$-prescribed failure, and the number sort is an initial segment of the
standard model at the working scale. Soundness of $\Vone$ for
$\neg\dWPHPcert$'s complexity class gives the second unprovability clause below
(Theorem~\ref{thm:hs}).
\end{proof}

\begin{remark}
Theorem~\ref{thm:dwphp} is the unconditional $\Vone$-analogue, for the
parity-based Nisan--Wigderson map, of Kraj\'{\i}\v{c}ek's conditional theorem
that $T_{\mathrm{PV}}$ does not prove $\mathrm{dWPHP}(\mathbf{tt}_{s,k})$ if
$\mathrm{P}\subseteq\mathrm{Size}(n^d)$ \cite[Theorem 1]{Krajicek2021}. The
trade is the expected one: descending from $T_{\mathrm{PV}}$ to
$\mathrm{AC}^0$-reasoning removes the hypothesis. It is moreover the
$\exists$-target form: the witness $b$ is quantified, not a fixed definable
target, and the negation is witnessed by \emph{one} seed rather than per
target.
\end{remark}

\section{The bridge: hitting implies compressing over $\Vone$}\label{sec:bridge}

\begin{lemma}\label{lem:bridge}
$\Vone\vdash\HScert[J]\to\dWPHPcert[J]$.
\end{lemma}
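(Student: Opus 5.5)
The plan is to take the witness $H$ supplied by $\HScert[J]$ and turn it directly into the string $b$ required by $\dWPHPcert[J]$: pack the hitting-set points of $H$, bit by bit, into the first $rn'|q|$ coordinates of $b$ using the injective packing $e$. No amplification step is used, because the guard already gives $rn'|q|\le m^3\le N$. The argument is then the compression half of the Atserias--Tzameret reduction. If some $\WF$ pair had certified value equal to $b$, its decoded points would be exactly the points of $H$. Then $A_{(W,Y)}$ would vanish on all of $H$ while being nonzero, contradicting the hitting property.

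Work in $\Vone$ and fix $N,s,t$ with $\Grd(N,s,t)$. By $\HScert[J]$ choose $H$ with $\WFH(H)$ and the hitting property. Define $b\subseteq[N]$ by $\SigmaB$-comprehension:
\[
p\in b \iff \exists i'<r\,\exists j<n'\,\exists k<|q|\,\bigl(p=e(i',j,k)\wedge \mathrm{bit}(k,h_{i',j})=1\bigr).
\]
This is $\SigmaB$ because all quantifiers are bounded by terms below $N$, which the guard ensures. The bits of each $h_{i',j}$ are read off directly from the code of $H$.

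Next, let $W,Y$ satisfy $\WF(W,Y)$, and suppose toward a contradiction that $\varepsilon(p;W,Y)=b(p)$ for all $p<N$.
\begin{enumerate}
\item \emph{Decoded points agree with $H$.} Fix $i'<r$ and $j<n'$. For every $k<|q|$, the $k$-th bit of $v_{i'j}(W,Y)$ is $\varepsilon(e(i',j,k);W,Y)=b(e(i',j,k))$. By injectivity of $e$ this equals the $k$-th bit of $h_{i',j}$. Since $h_{i',j}<q<2^{|q|}$, the uniqueness of a number with prescribed $|q|$ bits gives $v_{i'j}(W,Y)=h_{i',j}$. Lemma~\ref{lem:mx} makes $\varepsilon$ a genuine function, so this step is sound.
\item \emph{The premise of the hitting property holds.} Take $a:=(2q,\dots,2q)\in[2q+1]^{n'}$. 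Since $v_{i'j}<2^{|q|}\le 2q$, we have $a_j\neq v_{i'j}$ for every $i'$ and $j$, so $\NZ(a,W,Y)$ holds.
\item \emph{The conclusion fails.} By the hitting property there is $i<r$ with $\HIT(i,H,W,Y)$. Instantiating $i':=i$ yields some $j<n'$ with $h_{i,j}\neq v_{ij}(W,Y)$, which contradicts step 1.
\end{enumerate}
Hence some $p<N$ has $\varepsilon(p;W,Y)\neq b(p)$, which is exactly $\dWPHPcert[J]$.

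The main obstacle I expect is step 1: making the equality $v_{i'j}=h_{i',j}$ provable inside $\Vone$. This means checking that the bit-packing and its inverse on $[r]\times[n']\times[|q|]$ are $\SigmaB$, that injectivity of $e$ is provable, and that the $\SigmaB$-induction on bit positions asserted in \S\ref{subsec:eval} applies uniformly in the string parameters $W,Y,H$. Everything else consists of bounded quantifier manipulations inside the guard.
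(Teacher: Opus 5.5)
Your proposal is correct and matches the paper's proof step for step. You pack $H$ into $b$ via the injective packing $e$, recover $v_{i'j}(W,Y)=h_{i',j}$ by bit extensionality, trigger the hitting premise with $a=(2q,\dots,2q)$, and instantiate the outer index of $\HIT$ at $i'=i$ to get the contradiction. One minor point: the appeal to Lemma~\ref{lem:mx} in step 1 is unnecessary, since $\varepsilon$ is already single-valued by its definition by cases (the paper does not use that lemma here).
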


\begin{proof}
Reason in $\Vone$. Fix $N,s,t$ with $\Grd(N,s,t)$ and assume $\HScert[J]$;
let $H$ with $\WFH(H)$ be the hitting tuple it provides. Define
$b:=b(H)\subseteq[N]$ by $b(e(i',j,k)):=\mathrm{bit}(h_{i',j},k)$ for
$(i',j,k)\in[r]\times[n']\times[|q|]$ (a legitimate string by $\SigmaB$-comprehension,
the packing being injective into $[m^3]\subseteq[N]$), and $b(p):=0$ elsewhere.
We claim $b$ witnesses $\dWPHPcert$: no $\WF$ pair certified-computes $b$.

Suppose $\WF(W,Y)$ with $\varepsilon(\cdot;W,Y)=b$ on $[N]$. We derive a
contradiction with the hitting property of $H$.

\emph{(\(\alpha\)) $v_{i'j}(W,Y)=h_{i',j}$ for all $i',j$.} For each bit position
$k<|q|$, $\mathrm{bit}(v_{i'j},k)=\varepsilon(e(i',j,k);W,Y)=b(e(i',j,k))=
\mathrm{bit}(h_{i',j},k)$. Both numbers are $<2^{|q|}$ ($h_{i',j}<q<2^{|q|}$ by
$\WFH$), so binary extensionality on the $|q|$ positions --- a short
($|q|=O(\log N)$) $\SigmaB$-induction --- forces $v_{i'j}=h_{i',j}$.

\emph{(\(\beta\)) nonzeroness.} Take $a:=(2q,\dots,2q)$. Since $q\ge 2^{|q|-1}$
and every $v_{i'j}<2^{|q|}$, we have $a_j>v_{i'j}$ for all $i',j$; hence
$\NZ(a,W,Y)$, and $a\in[2q+1]^{n'}$.

\emph{(\(\gamma\)) $H$ does not hit.} For each $i<r$ instantiate $\HIT$'s outer
index at $i':=i$: by $(\alpha)$, $h_{i,j}=v_{i,j}$ for all $j$, so
$\neg\exists j\,(h_{i,j}\neq v_{i,j})$, i.e.\ $\neg\HIT(i,H,W,Y)$. As this holds
for every $i<r$, we have $\neg\exists i<r\,\HIT(i,H,W,Y)$.

Now $\WF(W,Y)\wedge(\exists a\in[2q+1]^{n'}\,\NZ(a,W,Y))$ holds by $(\beta)$,
so the inner implication of $\HScert$ yields $\exists i<r\,\HIT(i,H,W,Y)$,
contradicting $(\gamma)$. Hence no $\WF$ pair certified-computes $b$, and
$\dWPHPcert[J]$ holds.

Each step uses only $\SigmaB$-comprehension, short $\SigmaB$-induction on
$|q|=O(\log N)$ bit positions, and $\Delta_0$ arithmetic; the algebraic
evaluation of $A_{(W,Y)}$ is never performed --- the unfolded forms $\NZ,\HIT$
replace it --- so the $\mathrm{TC}^0$-hardness of that evaluation is avoided.
No amplification and no counting enters.
\end{proof}

\begin{theorem}\label{thm:hs}
$\Vone\nvdash\HScert[J]$ and $\Vone\nvdash\neg\HScert[J]$. Consequently, for
every strict-$\Delta_0$-definable candidate family $\{H_{N,s,t}\}$, $\Vone$ does
not prove ``$\{H_{N,s,t}\}$ hits the certified class''.
\end{theorem}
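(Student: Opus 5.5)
The plan is to derive both halves from what is already established: the unprovability half from Lemma~\ref{lem:bridge} together with Theorem~\ref{thm:dwphp}, and the unrefutability half from the standard model, where I expect $\HScert[J]$ to hold by a counting argument.

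\emph{Unprovability.} Suppose $\Vone\vdash\HScert[J]$. By Lemma~\ref{lem:bridge} and modus ponens, $\Vone\vdash\dWPHPcert[J]$, contradicting Theorem~\ref{thm:dwphp}. Semantically, the model $(M,\mathcal X)$ of Theorem~\ref{thm:dwphp} satisfies $\Vone+\neg\dWPHPcert[J]$. Since it satisfies $\Vone$, it also satisfies $\HScert[J]\to\dWPHPcert[J]$. Hence $(M,\mathcal X)\models\neg\HScert[J]$.

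\emph{Unrefutability.} I will show that the standard two-sorted model $(\mathbb N,\mathcal P_{\mathrm{fin}}(\mathbb N))$ satisfies $\HScert[J]$; since it is a model of $\Vone$, this gives $\Vone\nvdash\neg\HScert[J]$. Fix standard $N,s,t$ with $\Grd(N,s,t)$. In $\mathbb N$ the relativized $\varphi_0,\varphi_1$ are sound, so for a $\WF$ pair the certified value $\varepsilon(p;W,Y)$ is the true parity of the $J$-selected sub-tuple of $W$ (Lemma~\ref{lem:mx} ensures it is well defined). Thus the decoded points $v_{i'}(W,Y)$ depend only on the seed $W\in\{0,1\}^L$.

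The set $U$ of all decoded points, over all seeds and all $i'<r$, therefore has at most $r\cdot 2^{m}$ elements, with $m=L$. The candidate grid $[q]^{n'}$ has $(2m^3)^m$ elements, which is strictly larger than $r2^m=4m|m|2^m$ for $m\ge 2$; the degenerate $m=1$ case can be checked by hand or excluded by $\Grd$. So I pick a point $h\in[q]^{n'}\setminus U$ and set every $h_{i,\cdot}:=h$. Then $\WFH(H)$ holds, and for every $\WF$ pair, $\HIT(0,H,W,Y)$ holds because $h$ differs from each $v_{i'}(W,Y)$ in some coordinate. The inner implication of $\HScert[J]$ is thus satisfied regardless of $\NZ$.

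\emph{Consequence.} Suppose $\{H_{N,s,t}\}$ is strict-$\Delta_0$-definable and $\Vone$ proves that it hits the certified class. Then $\SigmaB$-comprehension gives the string $H_{N,s,t}$ inside $\Vone$, so $\Vone\vdash\HScert[J]$, contradicting the first part.

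The main point needing care is the soundness step in the standard model: that $\Sat_0$ or $\Sat_1$ at coordinate $p$ forces the true parity. This has to be checked against the shared-slot encoding of $R_p$. I expect it to follow directly from the fact that in $\mathbb N$ the formula $\exists Y\,\varphi_0$ defines even parity and $\exists Z\,\varphi_1$ defines odd parity. The counting inequality itself is routine.
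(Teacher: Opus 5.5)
Your argument is correct and has the same overall structure as the paper's proof. Unprovability comes from Lemma~\ref{lem:bridge} and Theorem~\ref{thm:dwphp}, unrefutability from the truth of $\HScert[J]$ in $\mathbb N$, and the final clause is the special case of a definable witness.

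The one genuine divergence is how you show a hitting tuple exists in $\mathbb N$. The paper draws $H$ uniformly from $[q]^{n'\times r}$ and union-bounds over the at most $2^m$ members of the class. You instead note that the unfolded $\HIT$ only asks for one row differing from every decoded point. So a single $h\in[q]^{n'}$ outside the at most $r2^m$ decoded points, repeated in every row, suffices once $q^{n'}>r2^m$.

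Your route is deterministic and more elementary, and it shows how weak the unfolded condition is: a constant-row $H$ already hits, and $\NZ$ is never used. It also avoids a small slip in the paper's estimate. There, $\prod_{i'}\Pr[\forall j\,h_{i',j}=v_{i'j}]$ is the probability that each row equals the decoded point \emph{of the same index}. Failure to hit only requires each row to equal \emph{some} decoded point, which has probability up to $(r\,q^{-n'})^{r}$. That is still small enough for the paper's conclusion to survive. The paper's framing, for its part, shows that almost every tuple hits, in line with the usual hitting-set semantics.

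Both arguments rely on the same soundness of $\varphi_0,\varphi_1$ in $\mathbb N$; the paper uses it tacitly in ``one per seed''. Lemma~\ref{lem:mx} is not needed for that step: $\Sat_1$ forces odd parity, and when $\Sat_1$ fails, $\WF$ gives $\Sat_0$ and hence even parity.

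Finally, $m=1$ cannot be ``checked by hand'', because your count genuinely fails there ($q^{n'}=2<8=r2^m$). It is, however, excluded by $\Grd$, since $rn'|q|=8>1=m^3$.
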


\begin{proof}
If $\Vone\vdash\HScert[J]$ then by Lemma~\ref{lem:bridge}
$\Vone\vdash\dWPHPcert[J]$, contradicting Theorem~\ref{thm:dwphp}; hence
$\Vone\nvdash\HScert[J]$. For the other direction, $\HScert[J]$ is true in
$\mathbb N$: at each admissible $(N,s,t)$ a uniformly random $r$-tuple $H\in
[q]^{n'\times r}$ fails to hit a fixed nonzero $A_{(W,Y)}$ with probability at
most $\prod_{i'}\Pr[\,\forall j\ h_{i',j}=v_{i'j}\,]=q^{-n'r}$. The certified
class is \emph{sparse} --- it has at most $2^{m}$ members, one per seed of length
$m$, inside the ambient space of size $2^{s}=2^{m^3}$
\cite[footnote~3]{AtseriasTzameret2025} --- so the probability that a random $H$
misses some member is at most $2^{m}q^{-n'r}<1$, since
$q^{n'r}\ge 2^{n'r}=2^{4m^2|m|}>2^{m}$; hence a hitting tuple exists. Since $\Vone$ is sound
(its theorems are true in $\mathbb N$) and $\HScert[J]$ is a true
$\forall\exists\forall\,\SigmaB$ sentence whose negation is therefore false in
$\mathbb N$, $\Vone\nvdash\neg\HScert[J]$. The final clause is the special case
in which the existential $H$ of $\HScert$ is replaced by a definable term.
\end{proof}

\section{Obstructions and open questions}\label{sec:obstructions}

\paragraph{The certified format is forced.} The constants of $A_{e,x}$ in
\cite{AtseriasTzameret2025} are outputs of the compressing map, evaluated. For
the parity-based map, the output bit is the parity of a seed sub-tuple, and
parity of a string-sort restriction is not $\SigmaB$-definable: $\SigmaB$ over
the string sort is uniform $\mathrm{AC}^0$, and $\mathrm{PARITY}\notin
\mathrm{AC}^0$ \cite{Ajtai1994}. Hence no $\Vone$-sentence carries the class
through an evaluated encoding, and the certified graph is the only
formalization available at this level. This is the unconditional reason the
present descent is confined to the certified class rather than to the full
Atserias--Tzameret scheme $\mathrm{HS}(\mathrm{PV})$; for honestly algebraic
$\Delta_0$-described classes (coefficient-listed sparse polynomials) evaluation
is iterated addition, $\mathrm{TC}^0$, and no sentence of the present shape is
even available.

\paragraph{Fixed standard parameters.} The refuting model of
Theorem~\ref{thm:dwphp} has nonstandard scheme sizes; whether $\Vone$ proves
$\HScert$ or $\dWPHPcert$ restricted to fixed standard $(s,t)$ is open in both
directions. This is the two-sorted image of the fixed-parameter obstruction of
the Nisan--Wigderson line.

\paragraph{Sufficiency.} We transfer only the necessity half
$\HScert\to\dWPHPcert$. The Atserias--Tzameret sufficiency direction
$\mathrm{dWPHP}\to\mathrm{HS}$ (existence of small hitting sets from the dual
weak pigeonhole principle, by two applications of $\mathrm{dWPHP}$ and a
constructive Schwartz--Zippel surjection) is proved over $S^1_2+\mathrm{dWPHP}$
and uses counting we do not have at the $\mathrm{AC}^0$ level; whether
$\Vone\vdash\dWPHPcert\to\HScert$ is open.

\bibliographystyle{plain}
\bibliography{refs}

@inproceedings{Khaniki2022,
  author    = {Erfan Khaniki},
  title     = {Nisan--Wigderson Generators in Proof Complexity: New Lower Bounds},
  booktitle = {37th Computational Complexity Conference (CCC 2022)},
  series    = {LIPIcs},
  volume    = {234},
  pages     = {17:1--17:15},
  year      = {2022},
  note      = {ECCC TR22-023},
  publisher = {Schloss Dagstuhl}
}

@inproceedings{AtseriasTzameret2025,
  author    = {Albert Atserias and Iddo Tzameret},
  title     = {Feasibly Constructive Proof of {S}chwartz--{Z}ippel Lemma and the Complexity of Finding Hitting Sets},
  booktitle = {Proceedings of the 57th Annual ACM Symposium on Theory of Computing (STOC 2025)},
  year      = {2025},
  note      = {arXiv:2411.07966; ECCC TR24-174}
}

@book{Krajicek1995,
  author    = {Jan Kraj\'{i}\v{c}ek},
  title     = {Bounded Arithmetic, Propositional Logic, and Complexity Theory},
  series    = {Encyclopedia of Mathematics and Its Applications},
  volume    = {60},
  publisher = {Cambridge University Press},
  year      = {1995}
}

@book{Krajicek2019,
  author    = {Jan Kraj\'{i}\v{c}ek},
  title     = {Proof Complexity},
  series    = {Encyclopedia of Mathematics and Its Applications},
  volume    = {170},
  publisher = {Cambridge University Press},
  year      = {2019}
}

@article{CookReckhow1979,
  author    = {Stephen A. Cook and Robert A. Reckhow},
  title     = {The Relative Efficiency of Propositional Proof Systems},
  journal   = {Journal of Symbolic Logic},
  volume    = {44},
  number    = {1},
  pages     = {36--50},
  year      = {1979}
}

@book{CookNguyen2010,
  author    = {Stephen A. Cook and Phuong Nguyen},
  title     = {Logical Foundations of Proof Complexity},
  series    = {Perspectives in Logic},
  publisher = {Cambridge University Press},
  year      = {2010}
}

@article{Jerabek2004,
  author    = {Emil Je\v{r}\'{a}bek},
  title     = {Dual Weak Pigeonhole Principle, Boolean Complexity, and Derandomization},
  journal   = {Annals of Pure and Applied Logic},
  volume    = {129},
  pages     = {1--37},
  year      = {2004}
}

@article{Jerabek2007,
  title={Approximate counting in bounded arithmetic},
  author={Je{\v{r}}{\'a}bek, Emil},
  journal={The Journal of Symbolic Logic},
  volume={72},
  number={3},
  pages={959--993},
  year={2007},
  publisher={Cambridge University Press}
}

@inproceedings{Korten2021,
  author    = {Oliver Korten},
  title     = {The Hardest Explicit Construction},
  booktitle = {62nd IEEE Annual Symposium on Foundations of Computer Science (FOCS 2021)},
  pages     = {433--444},
  year      = {2021},
  note      = {arXiv:2106.00875}
}

@article{KrajicekPudlak1998,
  author    = {Jan Kraj\'{i}\v{c}ek and Pavel Pudl\'{a}k},
  title     = {Some Consequences of Cryptographical Conjectures for $S^1_2$ and {EF}},
  journal   = {Information and Computation},
  volume    = {140},
  number    = {1},
  pages     = {82--94},
  year      = {1998}
}

@article{Krajicek2021,
  author    = {Jan Kraj\'{i}\v{c}ek},
  title     = {Small Circuits and Dual Weak {PHP} in the Universal Theory of Polynomial-Time Algorithms},
  journal   = {ACM Transactions on Computational Logic},
  volume    = {22},
  number    = {2},
  pages     = {11:1--11:4},
  year      = {2021},
  note      = {Article 11; arXiv:2004.11582}
}

@article{Ajtai1994,
  title={The complexity of the pigeonhole principle},
  author={Ajtai, Mikl{\'o}s},
  journal={Combinatorica},
  volume={14},
  number={4},
  pages={417--433},
  year={1994},
  publisher={Springer}
}

\end{document}